\newcommand{\R}{\mathbb{R}}
\DeclareMathOperator{\Ex}{\mathbb{E}}           
\newtheorem{theorem}{Theorem}
\newtheorem{lemma}{Lemma}
\theoremstyle{remark}
\DeclareMathOperator{\ent}{H}           
\newcommand{\norm}[1]{\left\Vert#1\right\Vert}
\begin{document}

\title{Chang's lemma via Pinsker's inequality}

\author{
Lianna Hambardzumyan\thanks{McGill University. \texttt{lianna.hambardzumyan@mail.mcgill.ca}.} 
\and Yaqiao Li \thanks{McGill University. \texttt{yaqiao.li@mail.mcgill.ca}.}
}

\maketitle

\begin{abstract}
Extending the idea in \cite{impagliazzo2014entropic} we give a short information theoretic proof for  Chang's lemma that is based on Pinsker's inequality.
\end{abstract}

\section{Introduction and the proof}

In recent years, there is a growing interest in applying information theoretic arguments  to combinatorics and theoretical computer science. For example, Fox's new proof \cite{fox2011new} of the graph removal lemma, Tao's solution \cite{tao2016erdHos,tao2016logarithmically} of the Erd{\"o}s discrepancy problem, and the application of information theory to communication complexity by Braverman et al. \cite{braverman2013information}. For more discussion, see the surveys \cite{entropyCount1,entropyCount2,wolf2017some,braverman2014interactive}. The purpose of this  note is to give a short  information theoretic proof of Chang's lemma, an important result and tool in both additive combinatorics and theoretical computer science.

For every function $f: \{-1, 1\}^n \to \R$, and every $S \subseteq [n]$, define  $\widehat{f}(S) = \Ex_{x\sim\{-1,1\}^n} f(x) \chi_S(x)$, where $\chi_S(x) = \prod_{i \in S} x_i$. The  numbers $\widehat{f}(S)$ are called \textit{Fourier coefficients} of $f$, and $f(x) = \sum_{S \subseteq [n]} \widehat{f}(S) \chi_S(x)$ is called the \textit{Fourier expansion} of $f$. 
Given these definitions, Chang's lemma states the following.
\begin{theorem}[Chang's lemma, \cite{chang2002polynomial}]  \label{lem:chang}
Let $A \subseteq \{-1, 1\}^n$ have density $\alpha = \frac{|A|}{2^n}$. Let $f =1_A$ denote the characteristic function of $A$, that is $f(x) = 1$ if $x \in A$, and $f(x) = 0$ if $x \not\in A$. Then,
\[
\sum_{i=1}^n \widehat{f}(\{i\})^2 \le 2 \alpha^2 \ln \frac{1}{\alpha}.
\]
\end{theorem}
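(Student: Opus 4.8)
The plan is to pass from the Boolean function $f = 1_A$ to the uniform probability distribution on $A$ and to read off the Fourier coefficients $\widehat{f}(\{i\})$ as coordinate biases. Let $X = (X_1,\dots,X_n)$ be drawn uniformly from $A$, and set $\mu_i = \Ex[X_i]$. A one-line computation gives $\widehat{f}(\{i\}) = \Ex_{x}[f(x)x_i] = \frac{1}{2^n}\sum_{x\in A} x_i = \alpha\,\mu_i$, so that $\sum_i \widehat{f}(\{i\})^2 = \alpha^2 \sum_i \mu_i^2$. Dividing the target inequality by $\alpha^2$, it therefore suffices to prove the clean statement $\sum_{i=1}^n \mu_i^2 \le 2\ln\frac{1}{\alpha}$, with all the combinatorial structure of $A$ now encoded in the distribution of $X$.

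The engine is the entropy of $X$, which I will measure in nats so as to match the natural logarithm in the statement. Since $X$ is uniform on a set of size $|A| = \alpha\, 2^n$, its entropy is exactly $\ent(X) = \ln|A| = n\ln 2 - \ln\frac{1}{\alpha}$. On the other hand, subadditivity of entropy gives $\ent(X) \le \sum_{i=1}^n \ent(X_i)$, which decouples the coordinates. Each marginal $X_i$ is a single sign with $\Pr[X_i = 1] = \frac{1+\mu_i}{2}$, so $\ent(X_i)$ is simply the binary entropy of a coin of bias $\mu_i$.

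Here is where Pinsker's inequality enters. I read the deficit $\ln 2 - \ent(X_i)$ as the relative entropy (in nats) of the law of $X_i$ from the fair coin, and Pinsker bounds it below by twice the squared total-variation distance between them. That distance equals exactly $|\mu_i|/2$, so $\ln 2 - \ent(X_i) \ge 2(\mu_i/2)^2 = \mu_i^2/2$. Summing over $i$ and combining with subadditivity yields $n\ln 2 - \ln\frac{1}{\alpha} = \ent(X) \le \sum_i \ent(X_i) \le n\ln 2 - \frac12\sum_i \mu_i^2$, and rearranging gives precisely $\frac12\sum_i \mu_i^2 \le \ln\frac{1}{\alpha}$, as required.

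I expect the only delicate point to be the bookkeeping with constants and logarithm bases: the factor $2$ in the theorem is exactly the factor delivered by Pinsker once the total-variation distance between a $\mu_i$-biased coin and a fair coin is correctly computed as $|\mu_i|/2$, and matching this with the $\ln$ in the statement forces the entropy to be taken in nats throughout. Conceptually the two substantive ingredients are subadditivity of entropy, which linearizes the problem across coordinates, and Pinsker, which converts each per-coordinate entropy deficit into the quadratic term $\mu_i^2$; neither step is computationally heavy, so the main risk is a misplaced constant rather than a genuine obstruction.
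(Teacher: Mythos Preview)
Your proof is correct and is essentially the paper's proof: you take $p$ uniform on $A$, compute $\widehat{f}(\{i\})=\alpha\mu_i$, use subadditivity of entropy to get $\ln(1/\alpha)\ge\sum_i(\ln 2-\ent(X_i))=\sum_i D(p_i\|q_i)$, and then apply Pinsker's inequality coordinatewise. The only cosmetic difference is that the paper phrases the global step as $D(p\|q)=\ent(q)-\ent(p)\ge\sum_i D(p_i\|q_i)$ and works with $\|p_i-q_i\|_1=|\mu_i|$ rather than the total-variation distance $|\mu_i|/2$, but the ingredients and constants match exactly.
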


The original Chang's lemma \cite{chang2002polynomial} is stated for more general groups than $\{-1, 1\}^n$, and its proof relies on Rudin's inequality. Our proof is inspired by  \cite{impagliazzo2014entropic}, where a proof for $\{-1, 1\}^n$ is given  using entropy and Taylor expansion. However, our proof is  shorter and more direct by replacing the Taylor expansion with Pinsker's inequality.

Let $p$ and $q$ be two probability distributions on a finite space $\Omega$. The Shannon entropy of $p$ is defined as $\ent(p) = - \sum_{x \in \Omega} p(x) \ln p(x)$. 
The Kullback-Leibler divergence  from $q$ to $p$ is defined as
$D(p||q) = \sum_{x \in \Omega} p(x) \ln \frac{p(x)}{q(x)}$, assuming $p(x) = 0$ whenever $q(x) = 0$.  Observe that $D(p||q) = \ent(q) - \ent(p)$ if $q$ is the uniform distribution. Let $\norm{\cdot}_1$ denote the $L_1$ norm: $\norm{g}_1 = \sum_{x} |g(x)|$. Pinsker's inequality states $D(p||q) \ge \frac{1}{2} \norm{p-q}_1^2$.

\begin{proof}[Proof of Theorem \ref{lem:chang}]
Let $p$ be the uniform distribution on the set $A$, and $q$ be the uniform distribution on $\{-1, 1\}^n$. For every $i \in [n]$,  denote the corresponding marginal distribution $p_i$ of $p$ as the pair $p_i =(\alpha_i, 1-\alpha_i)$ where $\alpha_i = \Pr[x_i=1 | x \in A]$. The marginal distributions of $q$ are $q_i = (1/2, 1/2)$, i.e., they are uniform distributions on $\{-1,1\}$. As the marginals of $q$ are independent, we have $\ent(q) = \sum_{i=1}^n \ent(q_i)$. Observe
\begin{align}  \label{eq:eq1}
\begin{split}
\widehat{f}(\{i\})^2
&=\left(\Ex_{x} f(x) x_i \right)^2   = \alpha^2 (\alpha_i -(1-\alpha_i))^2  \\
&= \alpha^2 \left(\left|\alpha_i -\frac{1}{2}\right|+ \left|1-\alpha_i -\frac{1}{2}\right|\right)^2=\alpha^2 \norm{p_i - q_i}_1^2.
\end{split}
\end{align}
By the subadditivity of Shannon entropy and Pinsker's inequality,
\begin{align}  \label{eq:eq2}
\begin{split}
\ln \frac{1}{\alpha} 
&= D(p||q) = \ent(q) - \ent(p) \\
&\ge \sum_{i=1}^n \Big(\ent(q_i) - \ent(p_i) \Big) 
= \sum_{i=1}^n D(p_i || q_i) \ge \frac{1}{2}\sum_{i=1}^n \norm{p_i - q_i}_1^2.
\end{split}
\end{align} 
Combining \eqref{eq:eq1} and \eqref{eq:eq2} gives the desired bound.
\end{proof}

\section{Concluding remarks}
Firstly, let $W^k = \sum_{|S| = k} \widehat{f}(S)^2$. In the analysis of boolean functions, Chang's Lemma is also called as the \emph{level-$1$ inequality} (see \cite{o2014analysis}), since it gives an upper bound for $W^1$. There is a generalization of Chang's lemma that states $\sum_{|S| \le k} \widehat{f}(S)^2 \le (\frac{2e}{k} \ln (1/\alpha))^k \alpha^2$ whenever $k \le 2 \ln(1/\alpha)$. This is called the \emph{level-$k$ inequality} in \cite{o2014analysis}. \emph{Can our argument be generalized to give a simple proof of the level-$k$ inequality?} On the one hand, the level-$k$ inequality \cite{o2014analysis} can be derived from hypercontractivity which adopts some entropic proofs (see \cite{hyp1,hyp2,hyp3}). This indicates some hope. On the other hand, the level-$k$ inequality only holds for sets $A$ with small density depending on $k$ for every $k \ge 2$. However, it is unclear how this constraint on the density can appear in an informational argument. For example, Pinsker's inequality does not have any constraint.

Secondly, we show that the inequality $D(p||q) \ge \sum_{i=1}^n D(p_i||q_i)$ that appears in \eqref{eq:eq2} can be generalized to the case whenever $q$ is a product distribution. Let $\Omega = \Omega_1 \times \Omega_2 \times \cdots \times \Omega_n$ be a finite product space. Let $p$ and $q$ be two probability distributions on $\Omega$, and let $p_i$ and $q_i$ denote the marginal distribution of $p$ and $q$ on $\Omega_i$, respectively. 

\begin{lemma}[supadditivity]  \label{lem:divergence}
If $q$ is a product distribution, then 
\[
D(p || q) \ge \sum_{i=1}^n D(p_i || q_i).
\]
\end{lemma}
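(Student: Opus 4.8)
The plan is to mirror the mechanism already at work in \eqref{eq:eq2} and reduce the inequality to the subadditivity of Shannon entropy. The point is that when $q$ is a product distribution, $\ln q$ splits additively over the coordinates, and this is exactly what makes all the $q$-dependent terms cancel, leaving only an entropy comparison.

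First I would use $q(x) = \prod_{i=1}^n q_i(x_i)$ to write $\ln q(x) = \sum_{i=1}^n \ln q_i(x_i)$ and expand
\[
D(p||q) = \sum_{x \in \Omega} p(x)\ln p(x) - \sum_{i=1}^n \sum_{x \in \Omega} p(x)\ln q_i(x_i) = -\ent(p) - \sum_{i=1}^n \sum_{x\in\Omega} p(x)\ln q_i(x_i).
\]
The key step is to observe that each inner sum depends only on the marginal $p_i$: summing $p(x)$ over all coordinates other than the $i$-th collapses $\sum_{x\in\Omega} p(x)\ln q_i(x_i)$ to $\sum_{x_i \in \Omega_i} p_i(x_i)\ln q_i(x_i)$. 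Expanding the right-hand side of the lemma in the same way yields
\[
\sum_{i=1}^n D(p_i||q_i) = -\sum_{i=1}^n \ent(p_i) - \sum_{i=1}^n \sum_{x_i\in\Omega_i} p_i(x_i)\ln q_i(x_i).
\]

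Subtracting the two expressions, the terms carrying $q_i$ are identical and cancel, leaving
\[
D(p||q) - \sum_{i=1}^n D(p_i||q_i) = \sum_{i=1}^n \ent(p_i) - \ent(p) \ge 0,
\]
where the final inequality is precisely the subadditivity of Shannon entropy. I expect no genuine obstacle here: the whole content is the cancellation of the $q$-terms afforded by the product structure, and the only points needing care are the correct marginalization in the key step and the standing convention $p(x)=0$ whenever $q(x)=0$, which keeps every logarithm well defined.
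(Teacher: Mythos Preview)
Your argument is correct, but it is not the route the paper takes. The paper proceeds by induction on $n$, reducing to the case $n=2$; there it applies the chain rule $D(p(X,Y)\|q(X,Y)) = D(p(X)\|q(X)) + D(p(Y|X)\|q(Y|X))$ and then shows $D(p(Y|X)\|q(Y|X)) \ge D(p(Y)\|q(Y))$ directly, using that $q(y|x)=q(y)$ for a product $q$ so that the difference collapses to $\sum_{x,y} p(x,y)\ln\frac{p(x,y)}{p(x)p(y)}\ge 0$. Your approach instead exploits the product structure of $q$ globally: the additivity of $\ln q$ makes the cross-entropy terms cancel at once, and the inequality reduces in a single step to the subadditivity of Shannon entropy, with no induction or chain rule needed. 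This is arguably cleaner, and it is exactly the mechanism already displayed in \eqref{eq:eq2} for uniform $q$; the paper's proof, on the other hand, makes the excess $D(p\|q)-\sum_i D(p_i\|q_i)$ explicitly visible as a mutual-information-type quantity. The only technicality worth recording is that your splitting $\ln q(x)=\sum_i \ln q_i(x_i)$ is used only at points with $p(x)>0$, and the convention $p(x)=0$ whenever $q(x)=0$ propagates to the marginals (if $q_i(x_i)=0$ then $q(x)=0$ for every $x$ with that $i$-th coordinate, hence $p_i(x_i)=0$), so each $D(p_i\|q_i)$ is finite and all logarithms are legitimate.
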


\begin{proof}
By induction, it suffices to prove it for $n=2$. Now suppose $n=2$. For notational clarity, denote $p$ by $p(X,Y)$ where $(X,Y) \in \Omega_1 \times \Omega_2$, and similarly for $q$. By the chain rule of divergence $D(p(X,Y)||q(X,Y)) = D(p(X)||q(X)) + D(p(Y|X) || q(Y|X))$. Hence, it suffices to show that $D(p(Y|X) || q(Y|X)) \ge D(p(Y)||q(Y))$. By the definition of divergence, one has
\begin{align*}
& D(p(Y|X) || q(Y|X)) - D(p(Y)||q(Y)) \\
&= \sum_{(x,y) \in \Omega_1 \times \Omega_2} p(x,y) \ln \frac{p(y|x)}{q(y|x)} 
- \sum_{y \in \Omega_2} p(y)\ln\frac{p(y)}{q(y)} \\
&= \sum_{(x,y) \in \Omega_1 \times \Omega_2} p(x,y) \ln \frac{p(y|x) q(y)}{q(y|x) p(y)} \\
&= \sum_{(x,y) \in \Omega_1 \times \Omega_2} p(x,y) \ln \frac{p(y|x)}{p(y)} 
= \sum_{(x,y) \in \Omega_1 \times \Omega_2} p(x,y) \ln \frac{p(x,y)}{p(x)p(y)}  \ge 0,
\end{align*}
where the last step follows from the log sum inequality.
\end{proof}

One can apply Lemma \ref{lem:divergence} directly in \eqref{eq:eq2} without using Shannon entropy. We point out that the supadditivity of the Kullback-Leibler divergence in  Lemma \ref{lem:divergence} is not necessarily true if $q$ is not a product distribution. Let $p(X,Y), q(X,Y)$ be two distributions given by 
\[
p =
\begin{pmatrix}
1/4 & 1/4 \\
1/4 & 1/4
\end{pmatrix}, 
\quad
q =
\begin{pmatrix}
1/4-3\epsilon & 1/4+\epsilon \\
1/4+\epsilon & 1/4+\epsilon
\end{pmatrix}, 
\]
where $-1/4 < \epsilon < 1/12$. In particular, $p$ is a product distribution. We will choose $\epsilon$ such that $q$ is not a product distribution. The marginal distributions are: $p(X) =(1/2, 1/2)$, $p(Y) = (1/2, 1/2)$, and $q(X) =(1/2-2\epsilon, 1/2+2\epsilon)$, $q(Y) = (1/2-2\epsilon, 1/2+2\epsilon)$. Let $\epsilon = 0.01$, using Wolfram Mathematica,
\[
D(p(X,Y) || q(X,Y)) \approx 0.0025 > D(p(X) || q(X)) + D(p(Y) || q(Y)) \approx 0.0016.
\]
Let $\epsilon = -0.2$, using Wolfram Mathematica,
\[
D(p(X,Y) || q(X,Y)) \approx 0.90 < D(p(X) || q(X)) + D(p(Y) || q(Y)) \approx 1.02.
\]

\section*{Acknowledgements}
Both authors wish to thank Hamed Hatami for stimulating and helpful discussions. We thank the anonymous referees for their comments which improved the presentation.  The authors are supported by an NSERC funding.

\bibliography{mybib}{}
\bibliographystyle{plain}

\end{document}